\newcommand{\X}{X}
\newcommand{\x}{x}
\newcommand{\indx}{i}
\newcommand{\Xbbf}{{\mathbb{\mathbf{X}}}}
\newcommand\blfootnote[1]{%
  \begingroup
  \renewcommand\thefootnote{}\footnote{#1}%
  \addtocounter{footnote}{-1}%
  \endgroup
}
\newcommand{\Xv}{{\bf X}}
\newcommand{\Xh}{{\hat{X}}}
\newcommand{\xh}{{\hat{x}}}
\def\d{\delta}
\def\e{\epsilon}
\DeclareMathOperator\E{E}
\newcommand\ie{i.e.,\xspace}
\def\textiid{i.i.d.\@\xspace}
\newcommand\iid{\ifmmode\text{ i.i.d. } \else \textiid \fi}
\newtheorem{theorem}{Theorem}
\newtheorem{lemma}{Lemma}
\newtheorem{corollary}{Corollary}
\newtheorem{definition}{Definition}
\theoremstyle{definition}
\newtheorem{remark}{Remark}
\begin{document}
\title{Rate-Distortion Dimension of Stochastic Processes}


\author{
    \IEEEauthorblockN{Farideh Ebrahim Rezagah\IEEEauthorrefmark{1},  Shirin Jalali\IEEEauthorrefmark{2}, Elza Erkip\IEEEauthorrefmark{1}, H. Vincent Poor\IEEEauthorrefmark{3}}\\
    \IEEEauthorblockA{\IEEEauthorrefmark{1}  NYU Tandon School of Engineering,  \{farideh,elza\}@nyu.edu}
    \IEEEauthorblockA{\IEEEauthorrefmark{2} Nokia - Bell Labs, Shirin.Jalali@alcatel-lucent.com}
    \IEEEauthorblockA{\IEEEauthorrefmark{3} Electrical Engineering  Department, Princeton University, poor@princeton.edu}
}

\maketitle

\begin{abstract}

The rate-distortion dimension (RDD) of an analog stationary process is studied as a  measure of complexity that captures the amount of information contained  in the process.  It is shown that the RDD of a process,  defined as two times the asymptotic ratio of its rate-distortion function $R(D)$ to $\log {1\over D}$ as the distortion $D$ approaches zero, is equal to its  information dimension (ID). This generalizes an earlier result by Kawabata and Dembo and provides  an operational approach to evaluate the ID of a process,  which previously was shown to be closely related to the effective dimension of the underlying process and also to the fundamental limits of compressed sensing. The relation between RDD and ID is illustrated for a piecewise constant process. 

\blfootnote{This work is part of a paper under review by  the {\em IEEE
Transactions on Information Theory,} available at  \cite{RJEP:16-arxiv}. This research
was supported by the National Science Foundation under Grant
CCF-1420575.}

\begin{IEEEkeywords} Rate-Distortion Dimension, Information Dimension, Compressed Sensing \end{IEEEkeywords}

\end{abstract}
\section{Introduction}

For discrete-alphabet signals, the Shannon entropy function $H(X)$ and the entropy rate  $\bar{H}(\Xbbf)=\lim_{n\to\infty}H(X_n|X^{n-1})$ measure the complexity of a random variable $X$ and  a stationary stochastic   process $\Xbbf=\{X_i\}$, respectively. Both of these measures are closely connected to the minimum number of bits per symbol required for representing stochastic sources \cite{cover} and can also be thought of as measures of signal structure. However, when we shift from discrete alphabet to continuous alphabet, both the entropy and the entropy rate become  infinite. Instead, for analog signals, the notion of information dimension (ID) introduced by R\'enyi \cite{Renyi:59} provides a framework that can be used to quantify signal structure. 


To illustrate what is meant for an analog process to be structured, consider a stationary memoryless (i.e., independent and identically distributed or i.i.d.) process $\Xbbf=\{X_i\}_{i=0}^{\infty}$ such that $X_i\sim (1-p)\delta_0+p f_c$, where $f_c$ denotes the probability density function (pdf) of an  absolutely continuous distribution and  $\delta_0$ denotes the Dirac measure with an atom at $0$. In other words, for each $i$, with probability $p\in[0,1]$, $X_i$ is exactly equal to zero; otherwise, it is drawn from $f_c$. By the strong law of large numbers, for large values of blocklength $n$, with probability approaching one,  a block $X^n$ generated by this source contains around $n(1-p)$ entries equal to zero, and the rest of the entries are real numbers in  the domain of $f_c$. To describe $X^n$ with a certain precision, for zero entries, it suffices to describe their locations. The number of bits required for this description does not depend on the reconstruction quality. However, for the remaining approximately $np$ elements of $X^n$, it  is known from rate-distortion theory that the required number of bits grows with the desired reconstruction quality. This intuitively suggests that $p$, which controls the number of non-zero elements in $X^n$, is a fundamental quantity related to the complexity and structure of $X^n$.  This intuition is accurately captured by the ID of this source which can be shown to be equal to $p$ \cite{Renyi:59}.  In fact, $\delta_0$ can be changed to any discrete probability distribution with finite entropy and the result will not change since the R\'enyi ID of a discrete source is 0.    

A further significance of the ID as a measure of structure is its relationship to the problem of compressed sensing. Consider the problem of recovering a signal $X_o^n$ from under-determined  measurements $Y^m=AX_o^n$, where $m<n$. It is known that if the input signal $X_o^n$ is sparse, or in general ``structured'', it can be accurately recovered from the  measurements, even if  $m$ is far fewer than $n$ \cite{Donoho:06,CandesT:06,RichModelbasedCS, ChRePaWi10, VeMaBl02, DoKaMe06}. For stationary memoryless processes, under some mild conditions on the distribution,   the R\'enyi ID of the first order marginal distribution of the source characterizes the fundamental limits of compressed sensing, \ie the minimum number of measurements required for asymptotically almost lossless recovery \cite{WuV:10}. The notion of the R\'enyi  ID is extended to  stationary processes in \cite{JalaliP:14-arxiv}, where it is proved that there is a direct relationship between the ID of a stationary process and the number of random linear measurements required for its universal recovery.

While the aforementioned results give an operational meaning to the ID of a signal, evaluating the ID of a stationary process is in general difficult. Kawabata and Dembo defined the rate-distortion dimension (RDD) of i.i.d. random variables (or vectors) based on the rate-distortion trade-off in the asymptoticly low distortion regime \cite{KawabataD:94}. They proved that for a random variable, its (upper and lower) RDD is equal to its (upper and lower) ID. 

The main contribution of this paper is to extend the notion of RDD to analog stationary processes, and to prove that, under some regularity conditions,  the  RDD of a stationary process is equal to its ID, defined in \cite{JalaliP:14-arxiv}.  This  provides an extension of the result of  Kawabata and Dembo to stochastic processes, and thereby  provides a  computationally feasible way of finding the ID of a stochastic process. In order to illustrate this, we compute the RDD of piecewise-constant stochastic processes, which are widely used to model many natural signals such as images. We derive upper and lower bounds on the rate-distortion functions of such signals,  and use these bounds to  evaluate the  RDD and hence, the ID of such processes. Furthermore, our results in  \cite{RJEP:16-arxiv} suggest that the RDD of a stochastic process is closely related to the fundamental limits of compressed sensing for the process, and hence RDD and ID can be thought of as measures of structure/complexity for arbitrary stationary stochastic processes.


The organization of the paper is as follows. Section \ref{sec:RDD} defines and examines the properties of ID and RDD. Section \ref{sec:RDD-ID} contains our main result which establishes  a connection between the ID  and the RDD of stochastic processes. Upper and lower bounds on the rate-distortion region of the piecewise constant source modeled by a first-order Markov process are provided in Section  \ref{sec:pwc_source} to illustrate the relationship between RDD and ID. Section \ref{sec:conc} concludes the paper.


\subsection{Notation}

Capital letters like $X$ and $Y$ represent random variables. 
For $x\in\mathds{R}$, $\lceil x \rceil$ ($\lfloor x \rfloor$) represents  the smallest (largest) integer larger (smaller) than $x$. For $b\in\mathds{N}^+$, $[x]_b$ denotes the $b$-bit approximation of $x$, \ie for $x=\lfloor x \rfloor+\sum_{i=1}^{\infty}(x)_i2^{-i}$, $(x)_i\in\{0,1\}$, $[x]_b=\lfloor x \rfloor+\sum_{i=1}^{b}(x)_i2^{-i}.$ Also, let $\langle x\rangle_b$ be defined as $\langle x\rangle_b={\lfloor bx\rfloor \over b}.$  For $x^n\in\mathds{R}^n$, $[x^n]_b$ and $\langle x^n\rangle_b$ are defined as $([x_1]_b,\ldots,[x_n]_b)$ and $(\langle x_1\rangle_b,\ldots,\langle x_n\rangle_b)$, respectively. 
Throughout the paper, $\log$ refers to the logarithm in base 2 . 

\section{Background}\label{sec:RDD}

In this section, we provide formal definitions of  ID and RDD and an overview of the literature.

\begin{definition}[R\'enyi information dimension \cite{Renyi:59}]\label{eq:def1}
The R\'enyi  upper and lower IDs of  an analog random variable $X$ are defined as
\[
\bar{d}(X)=\limsup_{b\to\infty} {H(\langle X\rangle_b)\over \log b},
\]
and
$
\underline{d}(X)=\liminf_{b\to\infty} {H(\langle X\rangle_b)\over \log b},
$
respectively. If the two limits coincide, the R\'enyi ID of $X$ is defined as $d(X)=\bar{d}(X)=\underline{d}(X)$.
\end{definition}
Definition \ref{eq:def1} can also be applied to analog vectors. For instance, for a random vector $X^n$, $\bar{d}(X^n)=\limsup_{b\to\infty} {H(\langle X^n\rangle_b)\over \log b}.$ 

 While the  above definition of the R\'enyi ID is in terms of the entropy of the $b$-level quantized version of $X$ normalized by the number of bits required for binary representation of it, $\log b$, as proved in Proposition 2 of \cite{WuV:10}, it can equivalently be defined in terms of the entropy of the $b$-bit quantized version of  $X$, $[X]_b$, normalized by $b$, \ie 
\[
\bar{d}(X)=\limsup_{b\to\infty} {H([X]_b)\over b},
\]
and $\underline{d}(X)=\liminf_{b\to\infty} {H([X]_b)\over b}.$


The notion of R\'enyi ID for random variables or vectors was extended in \cite{JalaliP:14-arxiv} to define the ID of analog stationary processes.

\begin{definition}[ID of a stationary process \cite{JalaliP:14-arxiv}]
  The  $k$-th order upper and lower IDs of   stationary process ${\Xbbf}=\{X_i\}_{i=-\infty}^{\infty}$ are defined as
   \[
  \bar{d}_k({\Xbbf})=\limsup_{b\to \infty} {1\over b}H([X_{k+1}]_b|[X^k]_b),
  \]
   and $ \underline{d}_k({\Xbbf})=\liminf_{b\to \infty} {1\over b}H([X_{k+1}]_b|[X^k]_b),$
respectively. The upper and lower ID of the process ${\Xbbf}$ are defined as
\[
\bar{d}_o({\Xbbf})=\lim_{k\to\infty}\bar{d}_k({\Xbbf})
 \]
 and $\underline{ d}_o({\Xbbf})=\lim_{k\to\infty}\underline{d}_k(\Xbbf),$
  respectively, when the limits exist. If $\bar{d}_o({\Xbbf})=\underline{d}_o({\Xbbf})$, the ID of process $\mathbb{\mathbf X}$, ${d}_o({\Xbbf})$, is defined as ${d}_o({\Xbbf})=\bar{d}_o({\Xbbf})=\underline{d}_o({\Xbbf})$.
\end{definition}
As proved in \cite{JalaliP:14-arxiv}, both $\bar{d}_k(\Xbbf)$ and $\underline{d}_k(\Xbbf)$ are both non-negative decreasing sequences in $k$. Hence, if they are also bounded, which is the case for instance for bounded sources,  their limits  as $k\to\infty$ also exist.

For a stationary memoryless process ${\Xbbf}=\{X_i\}_{i=-\infty}^{\infty}$,  this definition coincides  with that of R\'enyi\rq{}s ID of the first-order marginal distribution of the process $\Xbbf$.  That is $\bar{d}_o({\Xbbf})=\bar{d}(X_1)$ and $\underline{d}_o({\Xbbf})=\underline{d}(X_1)$. For sources with memory, taking the limit as   the memory parameter $k$ grows to infinity allows $d_o(\Xbbf)$ to capture the overall structure that is  present  in an analog stationary process. It can be proved that $d_o(\Xbbf)\leq 1$, for all bounded stationary processes, and if the stationary process $\Xbbf$ is structured,  $d_o(\Xbbf)$ is strictly smaller than one \cite{JalaliP:14-arxiv}.

Under some mild conditions on the distribution, \cite{WuV:10} proves that the R\'enyi ID of the first-order marginal distribution of a stationary memoryless process characterizes the fundamental limits of its compressed sensing. In other words, given a stationary memoryless process ${\Xbbf}$, asymptotically, as the blocklength $n$ grows to infinity,  the minimum number of linear projections ($m$) normalized by the blocklength ($n$) that is required for  recovering source $X^n$  is shown to be equal to $d(X_1)$. In \cite{JalaliP:14-arxiv}, it is shown that,  asymptotically, slightly more than $n\bar{d}_o({\Xbbf})$ random linear projections suffice  for \emph{universal} recovery of $X^n$ generated by any stationary process that satisfies some mixing conditions. These results provide an operational interpretation of the  ID of a random process.

The rate-distortion function of a stationary source measures the minimum number of bits per source symbol required for  achieving a given reconstruction quality. In some cases, as the reconstruction becomes finer, the behavior of  the rate-distortion function is connected to the level of structuredness of the source process and also to its ID mentioned earlier. In the rest of this section, we review the known results on these connections.

Consider a metric space  $(\mathds{R}^k,\rho)$, and random vector $X^k$. The  standard rate-distortion function \cite{cover} of vector $X^k$ under  distortion measure 
$d(x^k,\xh^k)=\rho(x^k,\xh^k)^r$, where $r>0$,  is defined as
\[
R_r(X^k,D)=\inf_{ \E[d(X^k,\Xh^k)]\leq D}I(X^k;\Xh^k).
\]
\begin{definition}[Rate-distortion dimension (RDD) of a random vector \cite{KawabataD:94}]
The upper and lower RDDs of $X^k$ are defined as
\[
\overline{\dim}_R(X^k)=r\limsup_{D\to0}{R_r(X^k,D)\over \log{1\over D}},
\]
and
$\underline{\dim}_R(X^k)=r\liminf_{D\to0}{R_r(X^k,D)\over \log{1\over D}},
$
 respectively. If $\overline{\dim}_R(X^k)=\underline{\dim}_R(X^k)$, the RDD of $X^n$ is defined as ${\dim}_R(X^k)=r\lim_{D\to0}{R_r(X^k,D)\over \log{1\over D}}$.
\end{definition}

The following theorem from \cite{KawabataD:94} establishes the connection between the R\'enyi ID of a random vector $X^k$ and its RDD, for a general distribution on $X^k$.

\begin{theorem}[Proposition 3.3 in \cite{KawabataD:94}] \label{thm:prop3-3}
Consider the metric space $(\mathds{R}^k,\rho)$, such that there exists $0<a_1\leq a_2<\infty$ for  which $a_1\max_{i=1}^k|x_i-\xh_i|\leq \rho(x^k,\xh^k)\leq a_2\max_{i=1}^k|x_i-\xh_i|,$
for all $x^k,\xh^k\in\mathds{R}^k$. Then,  for any distribution of $X^k$,
\[
\overline{\dim}_R(X^k)=\bar{d}(X^k),
\]
and
$\underline{\dim}_R(X^k)=\underline{d}(X^k),
$
where $\overline{\dim}_R(X^k)$,  and $\underline{\dim}_R(X^k)$ denote the upper and lower RDD of $X^k$ under fidelity constraint $d(x^k,\xh^k)=\rho(x^k,\xh^k)^r$.
\end{theorem}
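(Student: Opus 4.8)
The plan is to sandwich the rate--distortion function $R_r(X^k,D)$ between two quantities that are, up to additive $O(1)$ terms, equal to $H(\langle X^k\rangle_b)$ for a quantization level $b\asymp D^{-1/r}$, and then to divide by $\log\tfrac1D\sim r\log b$ and let $D\to0$, reading off $\bar{d}(X^k)$ and $\underline{d}(X^k)$ from the $\limsup$ and $\liminf$ of $H(\langle X^k\rangle_b)/\log b$. A preliminary normalization step is to use the hypothesis $a_1\max_i|x_i-\xh_i|\le\rho(x^k,\xh^k)\le a_2\max_i|x_i-\xh_i|$ to pass freely between the distortion $\rho(\cdot,\cdot)^r$ and the $\ell_\infty^r$ distortion $(\max_i|x_i-\xh_i|)^r$: each direction merely rescales $D$ by the constant $a_1^{-r}$ or $a_2^{r}$, which shifts $\log\tfrac1D$ by an additive constant and hence disappears after normalization. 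So it suffices to treat the $\ell_\infty^r$ distortion, whose metric balls are exactly the cubes of the lattice $\tfrac1b\mathds{Z}^k$ on which $\langle\cdot\rangle_b$ rounds.

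For the achievability (upper bound on $R_r$) I would simply feed back the quantized source: with $b=\lceil D^{-1/r}\rceil$ and $\Xh^k=\langle X^k\rangle_b$ one has $\max_i|X_i-\Xh_i|<1/b\le D^{1/r}$, hence $\E[(\max_i|X_i-\Xh_i|)^r]\le D$, so $R_r(X^k,D)\le I(X^k;\langle X^k\rangle_b)=H(\langle X^k\rangle_b)$. Dividing by $\log\tfrac1D=r\log b+O(1)$ and letting $D\to0$ (so that $b$ runs through all sufficiently large integers) yields $\overline{\dim}_R(X^k)\le\bar{d}(X^k)$ and $\underline{\dim}_R(X^k)\le\underline{d}(X^k)$.

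The converse is the crux. Take any pair $(X^k,\Xh^k)$ with $\E[(\max_i|X_i-\Xh_i|)^r]\le D$ and a level $b=b(D)\to\infty$ still to be tuned. The data-processing inequality gives $I(X^k;\Xh^k)\ge I(\langle X^k\rangle_b;\Xh^k)=H(\langle X^k\rangle_b)-H(\langle X^k\rangle_b\mid\Xh^k)$, so everything reduces to showing $H(\langle X^k\rangle_b\mid\Xh^k)=o(\log b)+O(1)$. Since $\langle\Xh^k\rangle_b$ is a function of $\Xh^k$, I would estimate $H(\langle X^k\rangle_b\mid\Xh^k)\le H(\langle X^k\rangle_b\mid\langle\Xh^k\rangle_b)\le H(N^k)$, where $N^k:=b\langle X^k\rangle_b-b\langle\Xh^k\rangle_b\in\mathds{Z}^k$ is the lattice ``rounding error'', which satisfies $|N_i|\le b|X_i-\Xh_i|+1$. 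By Markov's inequality $\P(|N_i|\ge m)\le Db^r/(m-1)^r$, so $N^k$ is concentrated near the origin at scale $bD^{1/r}$; choosing $b^{r}\asymp\big(D\log^2\tfrac1D\big)^{-1}$ makes $\log b=\tfrac1r\log\tfrac1D\,(1-o(1))$ (so nothing is lost in the exponent) while forcing $\E|N_i|$ and, for $r>1$, $H(N^k)$ itself to be $O(1)$, hence $o(\log b)$. This gives $R_r(X^k,D)=\inf I(X^k;\Xh^k)\ge H(\langle X^k\rangle_b)-o(\log b)$, and normalizing exactly as in the achievability part yields $\overline{\dim}_R(X^k)\ge\bar{d}(X^k)$ and $\underline{\dim}_R(X^k)\ge\underline{d}(X^k)$; combined with the upper bounds this is the theorem.

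The step I expect to fight with is precisely this balancing in the converse: $b$ must be pushed as close to $D^{-1/r}$ as possible, since any constant slack there costs a constant factor in $\log b/\log\tfrac1D$ and hence in the recovered dimension, yet $\P(\max_i|X_i-\Xh_i|>1/b)$ must remain small enough that the rounding-error entropy is negligible --- the logarithmic correction in $b^{r}\asymp(D\log^2\tfrac1D)^{-1}$ is exactly what reconciles these. Two technical loose ends would also need attention. First, when $r\le1$ the tail of $N^k$ is too heavy for $H(N^k)$ to be controlled by concentration alone, and one must instead condition on the event $\{\max_i|X_i-\Xh_i|>1/b\}$ (on whose complement each $\langle X_i\rangle_b$ is pinned, given $\Xh^k$, to at most three lattice points) and bound the conditional entropy on the bad event by a union bound. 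Second, for sources with $H(\langle X^k\rangle_b)=\infty$ (so $\bar{d}(X^k)=\infty$) one truncates $X^k$ to a large box, applies the above, and lets the box grow, using that $\langle\cdot\rangle_1$ is a function of $\langle\cdot\rangle_b$ so both sides are simultaneously infinite; and one should note that limits over $D\to0$ coincide with limits over $b\to\infty$ because $D\mapsto b(D)$ is, up to rounding, a continuous strictly decreasing surjection onto a tail of $\mathds{N}$.
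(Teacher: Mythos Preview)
The paper does not prove this statement: Theorem~\ref{thm:prop3-3} is quoted verbatim as Proposition~3.3 of Kawabata--Dembo \cite{KawabataD:94} and is used as a black box in the proof of Lemma~\ref{lemma:connect-UID-URDD}. There is therefore no ``paper's own proof'' to compare your proposal against.

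That said, your outline is the standard route and is essentially the argument Kawabata and Dembo give. The reduction from $\rho$ to the $\ell_\infty$ metric via the equivalence constants $a_1,a_2$ is correct and costs only an additive constant in $\log\tfrac1D$. The achievability by feeding back $\langle X^k\rangle_b$ with $b=\lceil D^{-1/r}\rceil$ is clean. For the converse, your instinct to avoid the moment bound on $H(N^k)$ and instead condition on the event $\{\max_i|X_i-\Xh_i|>c/b\}$ is the right one and works uniformly in $r>0$: on the good event each $\langle X_i\rangle_b$ lies in a bounded number of cells adjacent to $\langle \Xh_i\rangle_b$, and on the bad event (probability $\le Db^r/c^r$) one bounds the conditional entropy crudely by $k\log(b\cdot\mathrm{diam})$ after truncating $X^k$ to a large box, so that the product is $O((Db^r)\log b)=o(1)$ for your choice $b^r\asymp (D\log^2\tfrac1D)^{-1}$. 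The two loose ends you flag---the $r\le1$ tail issue and the $H(\langle X^k\rangle_b)=\infty$ case---are real and are handled exactly as you describe. One small point worth making explicit in a full write-up: when matching the $\limsup$ and $\liminf$ over $D\to0$ to those over integer $b\to\infty$, you need monotonicity of $R_r(X^k,\cdot)$ in $D$ (which is immediate) together with $\log(1/D)/(r\log b(D))\to1$, not merely that $b(D)$ is eventually surjective onto large integers.
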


\section{Equivalence of RDD and ID for Analog Processes} \label{sec:RDD-ID}

This section provides the main result of this paper which extends the notion of  RDD to stationary processes and establishes its connection of the ID of the process. 

Consider an analog stationary process ${\Xbbf}=\{X_i\}_{i=-\infty}^{\infty}$. The rate-distortion function $R({\Xbbf},D)$ of the source ${\Xbbf}$ under squared error distortion can be characterized  as \cite{book:Berger,Gallager}
\[
R({\Xbbf},D)=\lim_{m\to\infty} R^{(m)}({\Xbbf},D),
\]
where
\[
R^{(m)}({\Xbbf},D)=\inf_{\E[d_m(X^m,\Xh^m)]\leq D}{1\over m} I(X^m;\Xh^m)
\]
and
\begin{align}
d_m(x^m,\xh^m)={1\over m}\|x^m-\xh^m\|_2^2. \label{sq-err-distortion}
\end{align}

Note that with this distortion metric, we have $r=2$ and $R^{(m)}({\Xbbf},D)= {1\over m}R_2({X}^m,D)$. It can also be shown that $\inf_{m} R^{(m)}({\Xbbf},D)=R({\Xbbf},D)$ \cite{Gallager}.

\begin{definition}[RDD of a stationary process]
The upper and lower RDDs of a stationary process $\Xbbf$ is defined as
\[
\overline{\dim}_R({\Xbbf})=2\limsup_{D\to0}{R({\Xbbf},D)\over \log{1\over D}}
\]
and
$\underline{\dim}_R({\Xbbf})=2\liminf_{D\to0}{R({\Xbbf},D)\over \log{1\over D}}.$
If $\overline{\dim}_{R}(\Xbbf)=\underline{\dim}_{R}(\Xbbf)$, then the RDD of $\Xbbf$ is defined as $\dim_{R}(\Xbbf)=\overline{\dim}_{R}(\Xbbf)=\underline{\dim}_{R}(\Xbbf)$.
\end{definition}

The following theorem extends the equivalence of R\'enyi ID and RDD established in Theorem \ref{thm:prop3-3} for i.i.d. random vectors to stationary processes.

\begin{theorem}\label{thm:ID-eq-RDD}
For a stationary process $\Xbbf=\{X_i\}_{i=-\infty}^{\infty}$, assume that $\lim_{D\to 0}  { R^{(m)}({\Xbbf},D)\over \log {1\over D}}$ exists for all $m$. Then,
\[
{\dim}_{R}(\Xbbf) = \bar{d}_o({\Xbbf}).
\]
\end{theorem}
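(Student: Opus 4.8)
The plan is to bracket $\dim_R(\Xbbf)$ between $\underline d_o(\Xbbf)$ and $\bar d_o(\Xbbf)$, using the hypothesis both to evaluate the per-block limits and to force $\underline d_o(\Xbbf)=\bar d_o(\Xbbf)$.

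\emph{Per-block reduction.} For fixed $m$, the distortion $d_m(x^m,\hat x^m)=\tfrac1m\|x^m-\hat x^m\|_2^2$ equals $\rho^r$ with $r=2$ and $\rho$ equivalent to the $\ell_\infty$-norm on $\mathds{R}^m$, so Theorem~\ref{thm:prop3-3} applies to $X^m$ after the harmless rescaling $D\mapsto mD$ (which does not alter the ratio to $\log\tfrac1D$). Since the hypothesis makes $\lim_{D\to 0}R^{(m)}(\Xbbf,D)/\log\tfrac1D$ exist, Theorem~\ref{thm:prop3-3} forces $\bar d(X^m)=\underline d(X^m)=:d(X^m)$ and $g_m:=\lim_{D\to 0}R^{(m)}(\Xbbf,D)/\log\tfrac1D=\tfrac1{2m}d(X^m)$. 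Feeding the chain rule $H([X^{m}]_b)=\sum_{j=0}^{m-1}H([X_{j+1}]_b\mid[X^{j}]_b)$ into the $\tfrac1b H([\cdot]_b)$ form of information dimension, an induction on $m$ shows that $d_j(\Xbbf):=d(X^{j+1})-d(X^j)$ exists, equals $\bar d_j(\Xbbf)=\underline d_j(\Xbbf)$, and is non-increasing in $j$ (the conditional entropies are non-increasing in $j$, and this passes to the limit in $b$). Hence $\tfrac1m d(X^m)=\tfrac1m\sum_{j=0}^{m-1}d_j(\Xbbf)$ is a Ces\`{a}ro average of a non-increasing sequence, so it is non-increasing in $m$ with limit $d_o(\Xbbf)$; in particular $\bar d_o(\Xbbf)=\underline d_o(\Xbbf)=d_o(\Xbbf)$ and $g_m\downarrow\tfrac12 d_o(\Xbbf)$.

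\emph{Upper bound.} Since $R(\Xbbf,D)=\inf_m R^{(m)}(\Xbbf,D)\le R^{(m)}(\Xbbf,D)$ for every $m$,
\[
\overline{\dim}_R(\Xbbf)=2\limsup_{D\to 0}\frac{R(\Xbbf,D)}{\log\frac1D}\le 2\inf_m g_m=d_o(\Xbbf)=\bar d_o(\Xbbf).
\]

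\emph{Lower bound.} Here one needs a converse that is \emph{uniform in the block length}, since $R(\Xbbf,D)$ is an infimum over $m$. For any $m$, any $\hat X^m$ with $\tfrac1m\E\|X^m-\hat X^m\|_2^2\le D$, and any resolution $b$,
\[
\tfrac1m I(X^m;\hat X^m)\ge\tfrac1m H(\langle X^m\rangle_b)-\tfrac1m H(\langle X^m\rangle_b\mid\hat X^m)\ge\bar H(\langle\Xbbf\rangle_b)-(\log b-\tfrac12\log\tfrac1D+O(1)),
\]
where the first inequality is the data-processing inequality (as $\langle X^m\rangle_b$ is a function of $X^m$), and the second uses subadditivity of entropy (so $\tfrac1m H(\langle X^m\rangle_b)\ge\bar H(\langle\Xbbf\rangle_b)$) together with a lattice--volume bound on the entropy of the quantized error $\langle X^m\rangle_b-\langle\hat X^m\rangle_b$ driven by its second moment (assuming, say, a bounded source, or handling the tail by a truncation that is asymptotically negligible as $D\to 0$). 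Taking $b\asymp D^{-1/2}$ makes the parenthesized term $O(1)$, so $R(\Xbbf,D)=\inf_m R^{(m)}(\Xbbf,D)\ge\bar H(\langle\Xbbf\rangle_{\lceil D^{-1/2}\rceil})-O(1)$, hence $\underline{\dim}_R(\Xbbf)\ge\liminf_{b\to\infty}\bar H(\langle\Xbbf\rangle_b)/\log b$; it then remains to show $\lim_{b\to\infty}\bar H(\langle\Xbbf\rangle_b)/\log b=d_o(\Xbbf)$, the natural entropy-rate characterization of the process information dimension. An equivalent route is to bound $\tfrac1m I(X^m;\hat X^m)\ge\tfrac1m\sum_{j=1}^m R_c(X_j\mid X^{j-1};D_j)$ (chain rule plus data processing) with $\tfrac1m\sum_j D_j\le D$, apply the one-dimensional conditional Kawabata--Dembo converse (whose additive error is a \emph{universal} $O(1)$ because the ambient dimension is $1$) to each term, lower-bound the conditional dimensions appearing there by $d_o(\Xbbf)$ uniformly in $j$, and then optimize over the $D_j$ using convexity of $t\mapsto\log\tfrac1t$ and the distortion budget.

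Combining the two displays with $\bar d_o(\Xbbf)=\underline d_o(\Xbbf)$ gives $\underline d_o(\Xbbf)\le\underline{\dim}_R(\Xbbf)\le\overline{\dim}_R(\Xbbf)\le\bar d_o(\Xbbf)=\underline d_o(\Xbbf)$, so all four coincide and $\dim_R(\Xbbf)=\bar d_o(\Xbbf)$. I expect the main obstacle to be precisely the uniform-in-$m$ converse: the naive per-block Kawabata--Dembo bound carries an additive error that may grow with $m$, so one must either collapse everything onto single-letter conditional rate--distortion functions with a universal error term, or pass to the entropy rate $\bar H(\langle\Xbbf\rangle_b)$; in both cases the crux is interchanging the resolution limit $b\to\infty$ with the memory (block-length) limit, and it is exactly for this interchange that the hypothesis — through the existence of all vector dimensions $d(X^m)$, together with the properties of $d_o$ from \cite{JalaliP:14-arxiv} — is needed.
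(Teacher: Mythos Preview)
Your per-block reduction and the upper bound $\overline{\dim}_R(\Xbbf)\le \bar d_o(\Xbbf)$ are correct and match the paper's Lemma~\ref{lemma:connect-UID-URDD}: apply Theorem~\ref{thm:prop3-3} to $X^m$, use $R(\Xbbf,D)\le R^{(m)}(\Xbbf,D)$, and let $m\to\infty$ via Lemma~2 of \cite{JalaliP:14-arxiv}. Your additional observation that the hypothesis forces $\bar d_k(\Xbbf)=\underline d_k(\Xbbf)$ for every $k$ (hence $\bar d_o=\underline d_o$) is a nice byproduct the paper does not state.

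The gap is exactly where you flag it: the lower bound. Both of your sketched routes reduce to the same limit interchange --- either $\lim_{b}\lim_{m}$ versus $\lim_{m}\lim_{b}$ for $\tfrac1m H(\langle X^m\rangle_b)/\log b$, or a per-letter Kawabata--Dembo converse whose additive error you must control uniformly in the memory depth --- and you do not supply a mechanism for either. The hypothesis by itself (existence of each vector dimension $d(X^m)$) does not obviously justify this interchange; it gives you pointwise convergence of the inner limits but no uniformity.

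The paper sidesteps the direct converse entirely. Its key extra ingredient is the Wyner--Ziv bound
\[
\bigl|R^{(m)}(\Xbbf,D)-R(\Xbbf,D)\bigr|\le \tfrac1m\,I(X^m;X_{-\infty}^{0}),
\]
whose right side is independent of $D$. This immediately gives \emph{uniform} convergence of $R^{(m)}(\Xbbf,D)/\log\tfrac1D$ to $R(\Xbbf,D)/\log\tfrac1D$ on any interval $(0,\sigma_{\max}^2)$. Combined with the hypothesis that $\lim_{D\to 0}R^{(m)}(\Xbbf,D)/\log\tfrac1D$ exists for each $m$, a Moore--Osgood type argument (the paper's Lemma~\ref{lemma:uniform_conv}) then legitimately swaps the $D\to 0$ and $m\to\infty$ limits, yielding $\dim_R(\Xbbf)=\lim_m 2g_m=\bar d_o(\Xbbf)$. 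In short: you correctly identified that a uniform-in-$m$ control is the crux, but the clean tool for it is Wyner--Ziv's distortion-free gap bound, not a fresh Shannon lower bound.
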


The main ingredients of the proof of  Theorem \ref{thm:ID-eq-RDD} are the following two lemmas.


\begin{lemma}\label{lemma:connect-UID-URDD}
For any stationary process $\Xbbf$, we have
\[
\overline{\dim}_{R}(\Xbbf) \leq \bar{d}_o({\Xbbf})\leq \inf_{m} 2\Big(\limsup_{D\to0}  { R^{(m)}({\Xbbf},D)\over \log {1\over D}}\Big).
\]
\end{lemma}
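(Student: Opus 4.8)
The plan is to reduce the lemma, via the Kawabata--Dembo result (Theorem~\ref{thm:prop3-3}), to a purely combinatorial statement about the R\'enyi IDs of the blocks $X^m$, and then to exploit the entropy chain rule together with the monotonicity of the sequence $\bar d_k(\Xbbf)$. First I would rewrite $R^{(m)}(\Xbbf,D)$ in terms of the standard rate-distortion function of the block: since $d_m(x^m,\xh^m)=\tfrac1m\|x^m-\xh^m\|_2^2$, one has $R^{(m)}(\Xbbf,D)=\tfrac1m R_2(X^m,mD)$. The Euclidean norm on $\mathds{R}^m$ satisfies $\max_i|x_i-\xh_i|\le\|x^m-\xh^m\|_2\le\sqrt m\,\max_i|x_i-\xh_i|$, so Theorem~\ref{thm:prop3-3} applies with $\rho=\|\cdot\|_2$ and $r=2$, giving $2\limsup_{D\to0}\frac{R_2(X^m,D)}{\log(1/D)}=\bar d(X^m)$. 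Substituting $D'=mD$ and observing that $\log(1/D)=\log(1/D')+\log m$, so that the constant shift $\log m$ is asymptotically negligible in the ratio, I obtain the key per-block identity
\[
2\limsup_{D\to0}\frac{R^{(m)}(\Xbbf,D)}{\log\tfrac1D}=\frac{\bar d(X^m)}{m}\qquad\text{for every }m .
\]
Hence it suffices to establish $\overline{\dim}_R(\Xbbf)\le\bar d_o(\Xbbf)\le\inf_m\tfrac{1}{m}\bar d(X^m)$.

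For the right-hand inequality I would use the $[\cdot]_b$-form of the R\'enyi ID (equivalent to the $\langle\cdot\rangle_b$-form, as recalled in the excerpt) together with the chain rule, $H([X^m]_b)=\sum_{k=0}^{m-1}H([X_{k+1}]_b\mid[X^k]_b)$. By stationarity and the fact that conditioning cannot increase entropy, the summands are non-increasing in $k$, so $H([X^m]_b)\ge m\,H([X_m]_b\mid[X^{m-1}]_b)$. Dividing by $b$ and taking $\limsup_{b\to\infty}$ yields $\bar d(X^m)\ge m\,\bar d_{m-1}(\Xbbf)\ge m\,\bar d_o(\Xbbf)$, where the last step uses that $\bar d_k(\Xbbf)$ is non-increasing in $k$ with limit $\bar d_o(\Xbbf)$. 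This is precisely $\bar d_o(\Xbbf)\le\inf_m\tfrac1m\bar d(X^m)$.

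For the left-hand inequality, since $R(\Xbbf,D)=\inf_{m'}R^{(m')}(\Xbbf,D)\le R^{(m)}(\Xbbf,D)$ for each fixed $m$, the key identity gives $\overline{\dim}_R(\Xbbf)\le\tfrac1m\bar d(X^m)$ for every $m$, hence $\overline{\dim}_R(\Xbbf)\le\inf_m\tfrac1m\bar d(X^m)$. To upgrade the right side to $\bar d_o(\Xbbf)$, I would apply subadditivity of $\limsup$ to the same chain rule to get $\bar d(X^m)\le\sum_{k=0}^{m-1}\bar d_k(\Xbbf)$; since $\bar d_k(\Xbbf)\downarrow\bar d_o(\Xbbf)$, the Ces\`aro averages $\tfrac1m\sum_{k=0}^{m-1}\bar d_k(\Xbbf)$ converge to $\bar d_o(\Xbbf)$, so $\inf_m\tfrac1m\bar d(X^m)\le\liminf_m\tfrac1m\sum_{k=0}^{m-1}\bar d_k(\Xbbf)=\bar d_o(\Xbbf)$. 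Together with the previous paragraph this in fact shows $\inf_m\tfrac1m\bar d(X^m)=\bar d_o(\Xbbf)$, and in particular $\overline{\dim}_R(\Xbbf)\le\bar d_o(\Xbbf)$, which closes the chain.

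I expect the only delicate point to be the first step: correctly identifying the block distortion $d_m$ with a metric of the type covered by Theorem~\ref{thm:prop3-3}, and verifying that the $mD$-versus-$D$ rescaling and the additive $\log m$ term genuinely vanish from the asymptotic ratio with $\log\tfrac1D$. The remaining ingredients---the chain rule, the monotonicity of $\bar d_k(\Xbbf)$, and the Ces\`aro limit---are either elementary or already stated in the excerpt. (If $\bar d_o(\Xbbf)=\infty$ the asserted inequalities are vacuous, so one may assume all the quantities involved are finite.)
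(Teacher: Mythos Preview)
Your proposal is correct. Both your argument and the paper's hinge on the same first step---the Kawabata--Dembo identity $2\limsup_{D\to0}R^{(m)}(\Xbbf,D)/\log(1/D)=\bar d(X^m)/m$---but diverge afterward. The paper invokes Lemma~2 of \cite{JalaliP:14-arxiv} to assert directly that $\bar d_o(\Xbbf)=\lim_{k\to\infty}\tfrac1k\bar d(X^k)$, and then, for the upper bound $\bar d_o(\Xbbf)\le 2\limsup_D R^{(m)}/\log(1/D)$, works on the \emph{rate-distortion} side: it uses the subadditivity of $k\,R^{(k)}(\Xbbf,D)$ to control $R^{(k)}$ by $R^{(m)}$ for any fixed $m$, and passes to the limit in $k$. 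You instead stay entirely on the \emph{entropy} side after the key identity: the chain rule plus monotonicity gives $\bar d(X^m)\ge m\,\bar d_{m-1}(\Xbbf)\ge m\,\bar d_o(\Xbbf)$, while the chain rule plus subadditivity of $\limsup$ and a Ces\`aro argument gives $\inf_m\tfrac1m\bar d(X^m)\le\bar d_o(\Xbbf)$. Your route is self-contained (no appeal to the external lemma) and in fact establishes the slightly sharper statement $\inf_m\tfrac1m\bar d(X^m)=\bar d_o(\Xbbf)$; the paper's route, by contrast, illustrates that the same inequality can be read off purely from the structural properties of the rate-distortion sequence.
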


\begin{lemma}\label{lemma:uniform_conv}
Assume that $\lim_{D\to 0}  { R^{(m)}({\Xbbf},D)\over \log {1\over D}}$ exists for all $m$, and also there exists $\sigma_{\max}^2>0$, such that ${R^{(m)}({\Xbbf},D)}$ uniformly converges  to ${R({\Xbbf},D)}$, for $D\in(0,\sigma_{\max}^2)$, as $m$ grows to infinity. Then, ${\dim}_{R}(\Xbbf) = \bar{d}_o({\Xbbf}).$

\end{lemma}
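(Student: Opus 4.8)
The plan is to prove the claim by establishing the chain
\[
\underline{\dim}_{R}(\Xbbf)\le\overline{\dim}_{R}(\Xbbf)\le\bar{d}_o(\Xbbf)\le\inf_{m}2\Big(\limsup_{D\to0}\frac{R^{(m)}(\Xbbf,D)}{\log\frac{1}{D}}\Big)\le\underline{\dim}_{R}(\Xbbf),
\]
which forces all four quantities to coincide; in particular $\overline{\dim}_{R}(\Xbbf)=\underline{\dim}_{R}(\Xbbf)$, so $\dim_{R}(\Xbbf)$ exists, and it equals $\bar{d}_o(\Xbbf)$. The first inequality is immediate ($\liminf\le\limsup$), and the middle two are precisely the content of Lemma~\ref{lemma:connect-UID-URDD}. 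So the whole burden is the last inequality: that passing from $R(\Xbbf,D)$ to the finite-block functions $R^{(m)}(\Xbbf,D)$ costs nothing in the low-distortion normalization.

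To get that, I would fix $\epsilon>0$ and invoke the uniform-convergence hypothesis to pick $m_\epsilon$ with $|R^{(m)}(\Xbbf,D)-R(\Xbbf,D)|<\epsilon$ for all $m\ge m_\epsilon$ and all $D\in(0,\sigma_{\max}^2)$. Since $R(\Xbbf,D)=\inf_m R^{(m)}(\Xbbf,D)\le R^{(m)}(\Xbbf,D)$, this gives the two-sided bound $R(\Xbbf,D)\le R^{(m)}(\Xbbf,D)\le R(\Xbbf,D)+\epsilon$ on $(0,\sigma_{\max}^2)$. Dividing by $\log(1/D)$ and letting $D\to0$, the additive $\epsilon$ contributes nothing because $\log(1/D)\to\infty$; hence for every $m\ge m_\epsilon$ we obtain $\liminf_{D\to0}R^{(m)}(\Xbbf,D)/\log(1/D)=\liminf_{D\to0}R(\Xbbf,D)/\log(1/D)$, and the same identity with $\limsup$ replacing $\liminf$. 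Because $\lim_{D\to0}R^{(m)}(\Xbbf,D)/\log(1/D)$ exists by assumption, these identities force $\liminf_{D\to0}R(\Xbbf,D)/\log(1/D)=\limsup_{D\to0}R(\Xbbf,D)/\log(1/D)$, so $\dim_{R}(\Xbbf)$ is well defined and $2\limsup_{D\to0}R^{(m)}(\Xbbf,D)/\log(1/D)=\dim_{R}(\Xbbf)=\underline{\dim}_{R}(\Xbbf)$ for all $m\ge m_\epsilon$. Taking $m=m_\epsilon$ yields the missing last link $\inf_{m}2\big(\limsup_{D\to0}R^{(m)}(\Xbbf,D)/\log(1/D)\big)\le\underline{\dim}_{R}(\Xbbf)$, and the chain then collapses to $\dim_{R}(\Xbbf)=\bar{d}_o(\Xbbf)$.

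The one genuinely delicate point is the interchange of $\lim_{m\to\infty}$ (equivalently $\inf_m$) with $\lim_{D\to0}$. Pointwise convergence of $R^{(m)}(\Xbbf,D)$ to $R(\Xbbf,D)$ at each fixed $D$ would not be enough, since the ratio $R(\Xbbf,D)/\log(1/D)$ is a boundary quantity as $D\to0$ and the convergence rate could in principle degrade there; the uniform-convergence assumption on $(0,\sigma_{\max}^2)$ is exactly what rules this out. Everything else is routine bookkeeping around Lemma~\ref{lemma:connect-UID-URDD} and the identity $\inf_m R^{(m)}(\Xbbf,D)=R(\Xbbf,D)$.
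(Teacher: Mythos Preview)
Your argument is correct and follows essentially the same strategy as the paper: invoke Lemma~\ref{lemma:connect-UID-URDD} for the inequalities $\overline{\dim}_{R}(\Xbbf)\le\bar{d}_o(\Xbbf)\le\inf_m 2\limsup_{D\to0}R^{(m)}(\Xbbf,D)/\log(1/D)$, and then use uniform convergence together with the existence of each $\lim_{D\to0}R^{(m)}(\Xbbf,D)/\log(1/D)$ to collapse the right-hand side back to $\dim_R(\Xbbf)$. Your organization is in fact slightly tidier than the paper's: you derive the existence of $\dim_R(\Xbbf)$ explicitly (the paper simply asserts it at the outset), and by dividing by $\log(1/D)$ only at the end and letting $\epsilon/\log(1/D)\to0$ you avoid the paper's three separate $\epsilon,\epsilon',\epsilon''$ bookkeeping.
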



\begin{proof}[Proof of Lemma \ref{lemma:connect-UID-URDD}]
Given $k$, define distance measure $\rho_k$ such that for $x^k,\xh^k\in\mathds{R}^k$, $\rho_k(x^k,\xh^k)\triangleq \sqrt{kd_k(x^k,\xh^k)}$ where $d_k(\cdot,\cdot)$ is defined in (\ref{sq-err-distortion}). Note that $(\mathds{R}^k,\rho_k)$ is a metric space. Furthermore, since $\max_{i=1}^k|x_i-\xh_i|\leq \rho_k(x^k,\xh^k)\leq \sqrt{k} \max_{i=1}^k|x_i-\xh_i|$, from Theorem \ref{thm:prop3-3},
\[
2\limsup_{D\to0}{kR^{(k)}({\Xbbf},{D\over k})\over \log{1\over D}}=\bar{d}(X^k).
\]
By a change of variable, $2\limsup_{D\to0}{kR^{(k)}({\Xbbf},D)\over \log{1\over D}+\log{1\over k}}=\bar{d}(X^k),$
or
\[
2\limsup_{D\to0}{R^{(k)}({\Xbbf},D)\over \log{1\over D}}={1\over k}\bar{d}(X^k).
\]
Taking the limit of both sides as $k$ grows to infinity, and employing Lemma 2 from \cite{JalaliP:14-arxiv}, which shows that the upper ID of a process $\Xbbf$ can alternatively be represented as
\[
\bar{d}_o({\Xbbf})=\lim_{k\to\infty}{1\over k}\bigg(\limsup_{b\to\infty} {H([X^k]_b)\over b}\bigg),
\]
yields
\begin{align}
\lim_{k\to\infty}\bigg(2\limsup_{D\to0}{R^{(k)}({\Xbbf},D)\over \log{1\over D}}\bigg)&=\lim_{k\to\infty}{1\over k}\bar{d}(X^k)\nonumber\\
&=\bar{d}_o({\Xbbf}).\label{eq:application-lemma-1}
\end{align}

Since $R^{(k)}({\Xbbf},D)\geq \inf_mR^{(m)}({\Xbbf},D)$, from \eqref{eq:application-lemma-1},
\begin{align*}
\bar{d}_o({\Xbbf})&\geq \lim_{k\to\infty}\bigg(2\limsup_{D\to0}{\inf_m R^{(m)}({\Xbbf},D)\over \log{1\over D}}\bigg)\nonumber\\
&\stackrel{(a)}{=} \lim_{k\to\infty}\bigg(2\limsup_{D\to0}{R({\Xbbf},D)\over \log{1\over D}}\bigg)=\overline{\dim}_{R}(\Xbbf),
\end{align*}
where (a) follows from the fact that $R(\Xbbf,D)=\inf_{m} R^{(m)}(\Xbbf,D)$ \cite{Gallager}. This proves the lower bound in the desired result.

To prove the upper bound, fix a positive integer $m\in\mathds{N}$. Any integer $k$ can be written as $k=sm+r$, where $r\in\{0,\ldots,m-1\}$. Since $kR^{(k)}({\Xbbf},D)$ is  a sub-additive sequence \cite{Gallager}, and $k=m+\ldots+m+r$, $kR^{(k)}({\Xbbf},D)\leq sm R^{(m)}({\Xbbf},D)+rR^{(r)}({\Xbbf},D),$ it follows that
or
\begin{align}
R^{(k)}({\Xbbf},D)\leq {sm\over k} R^{(m)}({\Xbbf},D)+{r\over k}R^{(r)}({\Xbbf},D).\label{eq:sub-additive}
\end{align}
Combining \eqref{eq:application-lemma-1} and \eqref{eq:sub-additive}, it follows that
\begin{align}
\bar{d}_o({\Xbbf})\leq &\; 2\lim_{k\to\infty}\bigg(\limsup_{D\to0} {sm\over k} { R^{(m)}({\Xbbf},D)\over \log {1\over D}}\bigg)\nonumber\\
&+2\lim_{k\to\infty}\bigg( \limsup_{D\to0}{r\over k}{R^{(r)}({\Xbbf},D)\over \log{1\over D}}\bigg)\nonumber\\
= &\; 2\lim_{k\to\infty}\Big({sm\over k}\Big)\bigg(\limsup_{D\to0}  { R^{(m)}({\Xbbf},D)\over \log {1\over D}}\bigg)\nonumber\\
&+2\lim_{k\to\infty}\Big({r\over k}\Big)\bigg( \limsup_{D\to0}{R^{(r)}({\Xbbf},D)\over \log{1\over D}}\bigg)\nonumber\\
= &\; 2\bigg(\limsup_{D\to0}  { R^{(m)}({\Xbbf},D)\over \log {1\over D}}\bigg).\label{eq:ub_wo_inf}
\end{align}
Since $m$ is selected arbitrarily, we can take the infimum of the  right hand side of \eqref{eq:ub_wo_inf} and derive the desired result.
\end{proof}


\begin{proof}[Proof of Lemma \ref{lemma:uniform_conv}]
By the lemma's assumption, $\overline{\dim}_{R}(\Xbbf) ={\dim}_{R}(\Xbbf) $; therefore, from Lemma \ref{lemma:connect-UID-URDD}, \begin{align}\label{eq:b1}
{\dim}_{R}(\Xbbf) \leq \bar{d}_o({\Xbbf})\leq 2\Big(\lim_{D\to0}  { R^{(m)}({\Xbbf},D)\over \log {1\over D}}\Big),
\end{align}
for all $m$. Given the uniform convergence assumption, for any $\e>0$, there exists $m_{\e}\in\mathds{N}$, such that for all $m>m_{\e}$,
\begin{align}\label{eq:cond1}
\left|{R^{(m)}({\Xbbf},D)\over \log{1\over D}}-{R({\Xbbf},D)\over \log{1\over D}}\right|<\epsilon,
\end{align}
for all $D\in(0,\sigma^2_{\max})$.

On the other hand, for any $\e'>0$ and  $m$, there exists $\d_{\e',m}>0$, such that for all $D\in(0,\d_{\e',m})$,
\begin{align}\label{eq:cond2}
\lim_{D\to0}  { R^{(m)}({\Xbbf},D)\over \log {1\over D}}\leq { R^{(m)}({\Xbbf},D)\over \log {1\over D}} +\e'.
\end{align}
Also, for any $\e''>0$, there exists $\d_{\e''}>0$, such that for all $D\in(0,\d_{\e''})$,
\begin{align}\label{eq:cond3}
 { R({\Xbbf},D)\over \log {1\over D}} \leq \frac{1}{2}\left({\dim}_{R}(\Xbbf) +\e''\right).
\end{align}

Therefore, for any $\e,\e'$ and $\e''$, choosing $m>m_{\e}$, and $D\in(0,\min(\d_{\e',m},\d_{\e''}))$, and combining \eqref{eq:cond1}, \eqref{eq:cond2} and \eqref{eq:cond3} yields
\begin{align}\label{eq:b2}
\bar{d}_o({\Xbbf})\leq  {\dim}_{R}(\Xbbf) +\e+\e'+\e''.
\end{align}
Since $\e,\e'$ and $\e''$ are selected arbitrarily, combining \eqref{eq:b1} and \eqref{eq:b2} proves that ${\dim}_{R}(\Xbbf) = \bar{d}_o({\Xbbf})$.
\end{proof}


\begin{proof}[Proof of Theorem \ref{thm:ID-eq-RDD}]
It is shown in \cite{WynerZ:71} that for any stationary process $\Xbbf$
\begin{align}
|R^{(m)}(\Xbbf,D)-R(\Xbbf,D)|\leq{1\over m}I(X^m;X^0_{-\infty}). \label{eq:bd-Wyner-Ziv}
\end{align}
Note that while some of the results in \cite{WynerZ:71} hold only for sources that are either absolutely continuous or discrete, as shown in \cite{RJEP:16-arxiv}, this bound holds for  sources with general distributions. Since the right hand side of \eqref{eq:bd-Wyner-Ziv} does not depend on $D$, it shows that $R^{(m)}(\Xbbf,D)$ uniformly converges to $R(\Xbbf,D)$ for all $D>0$. On the other hand, for any $0<\sigma_{\max}<1$, and any $D\in(0,\sigma_{\max}^2)$, $0<1/\log{1\over D}<1/\log{1\over \sigma_{\max}^2}$. Therefore, ${R^{(m)}(\Xbbf,D)\over \log{1\over D}}$ uniformly converges  to ${R(\Xbbf,D)\over \log{1\over D}}$, for $D\in(0,\sigma_{\max}^2)$, and by Lemma \ref{lemma:uniform_conv}, ${\dim}_{R}(\Xbbf) = \bar{d}_o({\Xbbf}).$
\end{proof}

For an i.i.d. source $\Xbbf$, under some mild conditions, ${d}_o(\Xbbf)$ characterizes the fundamental limits of compressed sensing \cite{WuV:10}. In other words, asymptotically,  almost lossless recovery of $X^n$ generated by the source  $\Xbbf$ from measurements $Y^m=AX^n$ is feasible, if and only if  the normalized number of  measurements ($m/n$) is larger than $d_o(\Xbbf)$. If the rate-distortion function of the source satisfies the condition of  Theorem \ref{thm:ID-eq-RDD}, then  ${\dim}_R({\Xbbf})=\bar{d}_o({\Xbbf})$, which implies  that the RDD of an i.i.d.~process can also be used to characterize its compressed sensing fundamental limits.  
On the other hand, compression-based compressed sensing of stochastic processes is  studied in \cite{RJEP:16-arxiv}.  It is shown in  \cite{RJEP:16-arxiv} that there exists a compression-based recovery algorithm that  achieves almost lossless recovery by using slightly more than $n\overline{\dim}_R({\Xbbf})$ random linear measurements. This implies   that  $\overline{\dim}_{R}(\Xbbf)$ is achievable for  general sources.  (Note that, by Lemma \ref{lemma:connect-UID-URDD},    in general $\overline{\dim}_{R}(\Xbbf)\leq\bar{d}_o({\Xbbf})$.)

\begin{remark}
\label{rem:id}
Theorem \ref{thm:ID-eq-RDD}, by proving  the equivalence of ID and RDD, provides  a potentially easier  path to computing the ID of stochastic  processes.  Note that while to directly compute the ID of a process one needs to take the limit over the quantized approximations and then over the memory length,  to be able to calculate the RDD of a process, the exact characterization of the rate-distortion function is not required. In fact, it is easy to see that it would be enough to have upper and lower bounds on  the rate-distortion function of the source, $R(\Xbbf,D)$, that are within a reasonable gap. More precisely,  as long as the gap between the bounds grows as $o(\log{1 \over D})$, they can be used to evaluate the RDD.  Moreover, since the RDD  depends only on the low-distortion behavior  of the rate-distortion function,  studying its asymptotic small distortion performance  is sufficient for computing the RDD, and  as by Theorem \ref{thm:ID-eq-RDD}, ID of a source, without knowing the rate-distortion function explicitly.  For instance,  \cite{Gyorgy99} studies the asymptotic behavior of the rate-distortion function of  some stochastic sources and employs those results to evaluate the RDD of some i.i.d.~processes.
\end{remark}

The next section illustrates computation of RDD and its relation to ID  for a piecewise constant process. 


\section{RDD of a Piecewise-Constant Process} \label{sec:pwc_source}

In general, deriving the rate-distortion function of sources with memory is  very challenging. For instance, even for the binary symmetric Markov chain, the rate-distortion function is not known, except in a low-distortion region \cite{Gray:71}, and we have to resort to upper and lower bounds  \cite{Berger:77,JalaliW:07}. 

In this section we consider a piecewise constant signal modeled by a first order Markov process ${\Xbbf}=\{\X_\indx\}_{\indx=1}^\infty$, such that  conditioned on $\X_{\indx-1}=\x_{\indx-1}$, $\X_\indx$ is distributed according to $(1-p)\delta_{\x_{\indx-1}}+pf_c$, where $f_c$ denotes the pdf of an absolutely continuous distribution with bounded support, defined over an interval $(l,u)$. In other words, at each time $i$,  the process either makes a jump and takes a value drawn from distribution $f_c$, or it stays at $X_{i-1}$. The decision is made based on the outcome of an i.i.d. $\operatorname{Bern} \left({p}\right)$ random process independent of all past values of $\Xbbf$. While the output of this source is  not  sparse,  it is clearly a structured process.  The following theorem provides upper and lower bounds on  $R(\mathbf{X},D)$ of the piecewise-constant source. While there is a gap between the bounds on  $R(\mathbf{X},D)$, since the gap does not depend on $D$, as shown in the following corollary, they can be used to evaluate  RDD of the source exactly.

\begin{theorem}\label{thm:r-d-bound}
Consider a first-order stationary Markov process ${\Xbbf}=\{\X_\indx\}_{\indx=0}^\infty$, such that  conditioned on $\X_{\indx-1}=\x_{\indx-1}$, $\X_\indx$ is distributed according to $(1-p)\delta_{\x_{\indx-1}}+pf_c$, where $f_c$ denotes the pdf of an absolutely continuous distribution with bounded support,  $(l,u)$. 
 If $d_{\max}\triangleq \sup_{x,\xh\in(l,u)}d(x,\xh)<\infty$, then
\[
pR_{f_c}(D)\leq R(\mathbf{X},D)\leq H(p)+pR_{f_c}(D),
\]
  where $R_{f_c}(D)$  and $H(p)$  denote the rate distortion function of an i.i.d.~process distributed according to pdf $f_c$, and the binary entropy function ($-p\log_2 p-(1-p)\log_2(1-p)$), respectively.
\end{theorem}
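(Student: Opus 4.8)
The plan is to establish the two inequalities separately: the upper bound by an explicit two‑part coding scheme, and the lower bound by a converse that exposes the i.i.d.\ $f_c$ innovation subsource embedded in $\mathbf X$.

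\emph{Upper bound.} Fix a large blocklength $m$ and set $B_i \triangleq \mathbf 1[X_i \ne X_{i-1}]$, so that $B^m$ is (a.s.) an i.i.d.\ $\Bern(p)$ sequence and a deterministic function of $X^m$. The encoder first describes $B^m$ losslessly, at a rate approaching $H(p)$. Given $B^m$, the trajectory consists of $N$ runs of lengths $L_1,\dots,L_N$ (with $\sum_j L_j = m$ and $N/m\to p$), and the run values $V_1,\dots,V_N$ are i.i.d.\ $\sim f_c$ and independent of $B^m$; the encoder applies an optimal rate‑distortion code for the i.i.d.\ $f_c$ source to $V^N$ at distortion $D$, at a rate approaching $p\,R_{f_c}(D)$. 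The decoder recovers $B^m$, hence the run structure, together with the quantized run values $\hat V^N$, and outputs the piecewise‑constant reconstruction $\hat X_i = \hat V_{\kappa(i)}$, where $\kappa(i)$ indexes the run containing $i$. Because source and reconstruction are both constant on every run, $\tfrac1m\sum_i d(X_i,\hat X_i) = \tfrac1m\sum_j L_j\, d(V_j,\hat V_j)$; since $\{L_j\}$ and $\{(V_j,\hat V_j)\}$ are conditionally independent given $N$, the expectation of this quantity equals — up to boundary terms — the average distortion of the $f_c$‑code, hence is $\le D$, and the atypical event that $N$ deviates far from $mp$ is absorbed using $d_{\max}<\infty$. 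Letting $m\to\infty$ yields $R(\mathbf X,D)\le H(p)+pR_{f_c}(D)$.

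\emph{Lower bound.} For any test channel $p(\hat x^m\mid x^m)$ with $\E[d_m(X^m,\hat X^m)]\le D$, I would use that $B^m$ is a function of $X^m$ to split $I(X^m;\hat X^m) = I(B^m;\hat X^m) + I(X^m;\hat X^m\mid B^m)$. Conditioned on $B^m=b^m$, $X^m$ is a bijective function of the i.i.d.\ run values $V^N$; applying the single‑letter converse for the $f_c$ source to each run (and using that the best reconstruction of a constant run is essentially a single value used throughout, so a run of length $L_j$ reconstructed with per‑value distortion $D_j$ contributes $\approx L_j D_j$ to the total) together with the mutual independence of the $V_j$'s gives $\tfrac1m I(X^m;\hat X^m\mid b^m)\ge \tfrac1m\sum_j R_{f_c}(D_j)$ subject to $\sum_j L_j D_j \le m\,D(b^m)$ with $\E_{b^m}[D(b^m)]\le D$. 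Averaging over $B^m$ and using convexity and monotonicity of $R_{f_c}$ together with the law of large numbers ($N/m\to p$) gives a bound of order $p\,R_{f_c}(\cdot)$; the remaining nonnegative term $I(B^m;\hat X^m)$, which is the rate needed to convey the run boundaries to the decoder, is then invoked to upgrade the distortion level appearing in that bound to $D$. Taking the infimum over test channels and over $m$ gives $R(\mathbf X,D)\ge pR_{f_c}(D)$.

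\emph{Main obstacle.} The delicate point is the distortion accounting in the converse. A naïve version — inspecting only the reconstructions at run‑start positions, or bounding $\sum_j D_j\le\sum_j L_j D_j\le mD$ crudely — yields only $pR_{f_c}(D/p)$, which, incidentally, already suffices for the RDD corollary because $\log\tfrac1{D/p}\sim\log\tfrac1D$. Getting the stated $pR_{f_c}(D)$ requires exploiting a trade‑off between the two terms of the split above: a code that spreads its distortion roughly uniformly across positions has effective per‑innovation distortion $\approx D$ rather than $D/p$, whereas a code that instead concentrates accuracy on the long runs thereby forces $\hat X^m$ to reveal the run boundaries, driving $I(B^m;\hat X^m)$ toward $mH(p)$; the elementary inequality $H(p)\ge\tfrac p2\log\tfrac1p$ is what makes this compensation close. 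Formalizing this trade‑off uniformly over all test channels and blocklengths is the hard part; the achievability direction, by contrast, is routine once the piecewise‑constant decoder is in place.
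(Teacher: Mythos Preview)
Your upper bound is exactly the paper's: describe the jump pattern $B^m$ losslessly at rate $H(p)$, then apply an $f_c$ rate--distortion code to the (roughly $mp$) run values. No difference there.

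For the lower bound the paper takes a simpler route than you do. Instead of splitting $I(X^m;\hat X^m)=I(B^m;\hat X^m)+I(X^m;\hat X^m\mid B^m)$ and trying to exploit \emph{both} pieces, the paper simply hands $B^m$ to the decoder as genie side information. Since side information at the decoder can only lower the rate--distortion function, this gives $R(\mathbf X,D)\ge R(\mathbf X\mid B,D)$ in one stroke; what remains is the coding of the i.i.d.\ run values $V^N$ under the (run-length--weighted) distortion constraint, with the only technical work being the accounting for the random reduced blocklength $N$ --- details the paper defers to its extended version. Your chain-rule decomposition already contains this bound (just drop $I(B^m;\hat X^m)\ge 0$), so your proposal is in effect a refinement of the paper's; the paper never attempts, and does not need, the trade-off you sketch between the two mutual-information terms.

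On the ``main obstacle'': your concern about $D$ versus $D/p$ is a real technical point, but it is orthogonal to the paper's argument --- it arises from trying to extract a per-innovation distortion from a per-symbol constraint, and the genie framing lets you handle that directly through the weighted distortion on $V^N$ rather than via any compensation from $I(B^m;\hat X^m)$. In any case, as you yourself note, even the cruder bound $pR_{f_c}(D/p)$ differs from $pR_{f_c}(D)$ by an additive constant independent of $D$, so for the RDD corollary (which is the paper's sole use of this theorem) the distinction is immaterial. The elaborate trade-off you outline for upgrading $D/p$ to $D$ via $I(B^m;\hat X^m)$ is therefore unnecessary here, and that is where your proposal diverges most from the paper's much lighter sketch.
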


\begin{proof}
To prove the upper bound (achievability), we consider a code that describes  the positions of the jumps  losslessly at rate $H(p)$. Since the source is piecewise constant, after describing the positions of the jumps,  the encoder removes the repeated values and applies a lossy compression code of blocklength  close to $np$. Therefore, to describe the values at distortion $D$ the encoder roughly needs to spend $npR_{f_c}(D)$ bits.  For the lower bound (converse), we consider a genie-aided  decoder that has access to the positions of the jumps.  Then intuitively, to describe the values at distortion $D$, it still needs a rate of at least $pR_{f_c}(D)$.  The proof presented in \cite{RJEP:16-arxiv} makes these steps formal by properly analyzing the reduced block length which is a random number.
\end{proof}

\begin{corollary}\label{thm:RDD-piecewise-constant}
For the piecewise constant source in Theorem \ref{thm:r-d-bound}, we have
 $
 {\dim}_R({\Xbbf})=\bar{d}_o(\Xv)=p.  
 $ In other words, the RDD is equal to $p$ which is in turn equal to the ID of this source.
 \end{corollary}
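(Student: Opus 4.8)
The plan is to combine the bounds of Theorem~\ref{thm:r-d-bound} with the definition of the RDD, exploiting the fact that the gap between those bounds, $H(p)$, is a constant independent of $D$. Starting from $pR_{f_c}(D)\le R(\Xbbf,D)\le H(p)+pR_{f_c}(D)$, I would divide through by $\log\frac1D$ and let $D\to0$: since $H(p)/\log\frac1D\to0$, both outer expressions tend to $p\lim_{D\to0}R_{f_c}(D)/\log\frac1D$, so by a squeeze argument
\[
\dim_R(\Xbbf)=2\lim_{D\to0}\frac{R(\Xbbf,D)}{\log\frac1D}=2p\lim_{D\to0}\frac{R_{f_c}(D)}{\log\frac1D},
\]
once the last limit is shown to exist. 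Thus the whole statement reduces to the low-distortion behavior of the per-letter rate-distortion function $R_{f_c}(D)$ of an i.i.d.\ source with marginal $f_c$.

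Next I would evaluate this limit. Because $f_c$ is absolutely continuous with bounded support, its R\'enyi information dimension is $\bar d(X_1)=\underline d(X_1)=1$, so Theorem~\ref{thm:prop3-3} applied with $k=1$, $\rho(x,\hat x)=|x-\hat x|$, and $r=2$ gives $\dim_R(X_1)=1$, i.e.\ $\lim_{D\to0}R_{f_c}(D)/\log\frac1D=\tfrac12$ (equivalently, the classical expansion $R_{f_c}(D)=\tfrac12\log\frac1D+O(1)$, valid since bounded support makes the differential entropy finite). This confirms the limit exists and, substituted into the display above, yields $\dim_R(\Xbbf)=2p\cdot\tfrac12=p$.

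It remains to show $\bar d_o(\Xbbf)=p$. I would compute the information dimensions of the finite marginals directly and use the representation $\bar d_o(\Xbbf)=\lim_{m\to\infty}\tfrac1m\bar d(X^m)$ from Lemma~2 of \cite{JalaliP:14-arxiv}. Conditioned on the pattern of jumps among times $2,\dots,m$ --- a subset whose size $J$ is $\mathrm{Binomial}(m-1,p)$ --- the vector $X^m$ consists of $1+J$ independent $f_c$-draws with the remaining coordinates exact repetitions, hence its conditional law is absolutely continuous on a $(1+J)$-dimensional affine subspace; averaging over jump patterns and using the dimension-additivity of R\'enyi ID for mixtures of pure-type measures gives $\bar d(X^m)=\underline d(X^m)=1+(m-1)p$. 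Therefore $\bar d_o(\Xbbf)=\lim_{m\to\infty}\frac{1+(m-1)p}{m}=p$, which combined with the previous paragraph gives $\dim_R(\Xbbf)=\bar d_o(\Xbbf)=p$. As an alternative to the direct computation of $\bar d_o$, the equalities $\bar d(X^m)=\underline d(X^m)$ for all $m$ verify, via the change-of-variables identities $2\limsup_{D\to0}R^{(m)}(\Xbbf,D)/\log\frac1D=\tfrac1m\bar d(X^m)$ and $2\liminf_{D\to0}R^{(m)}(\Xbbf,D)/\log\frac1D=\tfrac1m\underline d(X^m)$ (which follow from Theorem~\ref{thm:prop3-3} exactly as in the proof of Lemma~\ref{lemma:connect-UID-URDD}), that $\lim_{D\to0}R^{(m)}(\Xbbf,D)/\log\frac1D$ exists for every $m$, so Theorem~\ref{thm:ID-eq-RDD} applies and yields $\dim_R(\Xbbf)=\bar d_o(\Xbbf)$ at once.

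The only step that is not routine bookkeeping is the information-dimension computation for the mixed-type marginals $X^m$: one must justify that a mixture of absolutely continuous distributions supported on affine subspaces of different dimensions has R\'enyi ID equal to the expected subspace dimension, and that the $\limsup$ and $\liminf$ in its definition coincide. With Theorem~\ref{thm:prop3-3}, the alternative representation of $\bar d_o$, and the constant-gap observation in hand, everything else is the squeeze argument of the first paragraph.
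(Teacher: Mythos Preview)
Your computation of $\dim_R(\Xbbf)=p$ via the squeeze argument and Theorem~\ref{thm:prop3-3} applied to the i.i.d.\ $f_c$ source is exactly what the paper does. Where you diverge is in establishing $\bar d_o(\Xbbf)=p$: the paper does not compute $\bar d(X^m)$ at all, but instead (i) asserts that the hypothesis of Theorem~\ref{thm:ID-eq-RDD} holds once $\dim_R(\Xbbf)$ exists, and (ii) cross-references the value $\bar d_o(\Xbbf)=p$ already computed in \cite{JalaliP:14-arxiv}. Your route is more self-contained: you exhibit $X^m$ as a finite mixture over jump patterns of measures that are absolutely continuous on $(1{+}J)$-dimensional coordinate subspaces, average the dimensions to get $\bar d(X^m)=\underline d(X^m)=1+(m-1)p$, and then take $m\to\infty$. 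This has the added benefit that the equality $\bar d(X^m)=\underline d(X^m)$, via the identity $2\lim_{D\to0}R^{(m)}(\Xbbf,D)/\log\tfrac1D=\tfrac1m d(X^m)$, actually \emph{verifies} the hypothesis of Theorem~\ref{thm:ID-eq-RDD}, which the paper's proof invokes without checking (existence of $\dim_R(\Xbbf)$ is not the stated hypothesis). The one point you flag yourself---that a countable mixture of absolutely continuous measures on affine subspaces has R\'enyi ID equal to the expected subspace dimension---is standard under the finite-entropy condition guaranteed here by the bounded support of $f_c$, so your argument is complete.
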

 \begin{proof}
 Given the bound on the rate-distortion process derived in Theorem \ref{thm:r-d-bound}, it is easy to directly derive the  RDD of such a source. More precisely, given the upper bound, it follows that
 \begin{align*}
 \overline{\dim}_R({\Xbbf})=2\limsup_{D\to0}{R({\Xbbf},D)\over \log{1\over D}}
 = p(\limsup_{D\to0} {R_{f_c}(D) \over \log{1\over D}})
 =p,
 \end{align*}
 where the last step follows from \cite{Renyi:59} and \cite{KawabataD:94}. Similarly, given the lower bound, we have $\underline{\dim}_R({\Xbbf})\geq p$. Therefore, $p\leq\underline{\dim}_R({\Xbbf})\leq\overline{\dim}_R({\Xbbf})\leq p$. In other words, for this source RDD exists and is equal to ${\dim}_R({\Xbbf})=p$. Hence, the condition of  Theorem \ref{thm:ID-eq-RDD} holds and we have $ {\dim}_R({\Xbbf})=\bar{d}_o(\Xv).$
 This agrees with the ID of this source found in Theorem 2 in \cite{JalaliP:14-arxiv}, $\bar{d}_o(\Xv)=\underline{d}_o(\Xv)=p.$
\end{proof}

Corollary \ref{thm:RDD-piecewise-constant} states that the RDD of the piecewise constant source described in Theorem \ref{thm:r-d-bound} is equal to $p$, which is also the ID of this process \cite{JalaliP:14-arxiv}. While \cite{JalaliP:14-arxiv} directly computes the ID of such processes, Theorem \ref{thm:ID-eq-RDD} provides an easier alternate method for computing the ID as suggested in Remark~\ref{rem:id}. 




\section{Conclusions}\label{sec:conc}

In this paper we have defined the RDD of stationary  processes, as a generalization of the RDD of stochastic vectors introduced in \cite{KawabataD:94}. We have proved that under some mild conditions the RDD of a stationary process is equal to its ID introduced in \cite{JalaliP:14-arxiv}.
This gives an operational method to evaluate the ID of a stationary process, which was previously    shown to be related to the fundamental limits of compressed sensing \cite{WuV:10,JalaliP:14-arxiv,RJEP:16-arxiv}.


\bibliographystyle{unsrt}
\bibliography{myrefs}

\begin{thebibliography}{10}

\bibitem{RJEP:16-arxiv}
F.~Ebrahim~Rezagah, S.~Jalali, E.~Erkip, and H.~V. Poor.
\newblock Compression-based compressed sensing.
\newblock {\em arXiv preprint arXiv:1601.01654}, 2016.

\bibitem{cover}
T.~Cover and J.~Thomas.
\newblock {\em Elements of Information Theory}.
\newblock Wiley, New York, second edition, 2006.

\bibitem{Renyi:59}
A.~R{\'e}nyi.
\newblock On the dimension and entropy of probability distributions.
\newblock {\em Acta Mathematica Academiae Scientiarum Hungarica},
  10(1-2):193--215, 1959.

\bibitem{Donoho:06}
D.L. Donoho.
\newblock Compressed sensing.
\newblock {\em IEEE Trans. Inf. Theory}, 52(4):1289--1306, 2006.

\bibitem{CandesT:06}
E.~J Cand\`es and T.~Tao.
\newblock Near-optimal signal recovery from random projections: Universal
  encoding strategies?
\newblock {\em IEEE Trans. Inf. Theory}, 52(12):5406--5425, 2006.

\bibitem{RichModelbasedCS}
R.~G. Baraniuk, V.~Cevher, M.~F. Duarte, and C.~Hegde.
\newblock Model-based compressive sensing.
\newblock {\em IEEE Trans. Inf. Theory}, 56(4):1982--2001, Apr. 2010.

\bibitem{ChRePaWi10}
V.~Chandrasekaran, B.~Recht, P.~A. Parrilo, and A.~S. Willsky.
\newblock The convex geometry of linear inverse problems.
\newblock {\em Found. of Comp. Math.}, 12(6):805--849, 2012.

\bibitem{VeMaBl02}
M.~Vetterli, P.~Marziliano, and T.~Blu.
\newblock Sampling signals with finite rate of innovation.
\newblock {\em IEEE Trans. Signal Process.}, 50(6):1417--1428, Jun. 2002.

\bibitem{DoKaMe06}
D.~L. Donoho, H.~Kakavand, and J.~Mammen.
\newblock The simplest solution to an underdetermined system of linear
  equations.
\newblock In {\em Proc. IEEE Int. Symp. Inform. Theory (ISIT)}, pages 1924
  --1928, Jul. 2006.

\bibitem{WuV:10}
Y.~Wu and S.~Verd\'u.
\newblock R\'{e}nyi information dimension: Fundamental limits of almost
  lossless analog compression.
\newblock {\em IEEE Trans. Inf. Theory}, 56(8):3721--3748, Aug. 2010.

\bibitem{JalaliP:14-arxiv}
S.~Jalali and H.~V. Poor.
\newblock Universal compressed sensing of {Markov} sources.
\newblock {\em arXiv preprint arXiv:1406.7807}, 2014.

\bibitem{KawabataD:94}
T.~Kawabata and A.~Dembo.
\newblock The rate-distortion dimension of sets and measures.
\newblock {\em IEEE Trans. Inf. Theory}, 40(5):1564--1572, Sep. 1994.

\bibitem{book:Berger}
T.~Berger.
\newblock {\em Rate-Distortion Theory: A Mathematical Basis for Data
  Compression}.
\newblock Englewood Cliffs, NJ: Prentice-Hall, 1971.

\bibitem{Gallager}
R.~G. Gallager.
\newblock {\em Information Theory and Reliable Communication}.
\newblock NY: John Wiley, 1968.

\bibitem{WynerZ:71}
A.~D. Wyner and J.~Ziv.
\newblock Bounds on the rate-distortion function for stationary sources with
  memory.
\newblock {\em IEEE Trans. Inf. Theory}, 17(5):508--513, 1971.

\bibitem{Gyorgy99}
A.~Gy\"{o}rgy, T.~Linder, and K.~Zeger.
\newblock On the rate-distortion function of random vectors and stationary
  sources with mixed distributions.
\newblock {\em IEEE Trans. Inf. Theory}, 45(6):2110--2115, Sep. 1999.

\bibitem{Gray:71}
R.~Gray.
\newblock Rate distortion functions for finite-state finite-alphabet {Markov}
  sources.
\newblock {\em IEEE Trans. Inf. Theory}, 17(2):127--134, Mar. 1971.

\bibitem{Berger:77}
T.~Berger.
\newblock Explicit bounds to r(d) for a binary symmetric {Markov} source.
\newblock {\em IEEE Trans. Inf. Theory}, 23(1):52--59, Jan 1977.

\bibitem{JalaliW:07}
S.~Jalali and T.~Weissman.
\newblock New bounds on the rate-distortion function of a binary {Markov}
  source.
\newblock In {\em Proc. IEEE Int. Symp. Inform. Theory}, pages 571--575. IEEE,
  2007.

\end{thebibliography}

\setcounter{equation}{0}
\renewcommand{\theequation}{\thesection.\arabic{equation}}

\end{document}